\ifdefined\ORLtwocolumn
  \documentclass[final,5p,times,twocolumn,authoryear]{elsarticle}
\else
  \documentclass[final,11pt,a4paper]{elsarticle}
\fi

\journal{Operations Research Letters}

\usepackage{amsmath,amssymb,amsthm,mathtools}
\usepackage{graphicx,booktabs,threeparttable,multirow,array,float}
\usepackage{microtype,setspace,enumitem}
\usepackage{algorithm,algpseudocode}
\usepackage[font=small,labelfont=bf]{caption}
\usepackage[colorlinks=true,linkcolor=blue,citecolor=blue,urlcolor=blue]{hyperref}
\usepackage[nameinlink,noabbrev]{cleveref}
\makeatletter
\renewcommand{\theHALG@line}{\thealgorithm.\arabic{ALG@line}}
\makeatother

\biboptions{authoryear,round}

\ifdefined\ORLtwocolumn
  \newcommand{\ORLfigurewidth}{\columnwidth}
  
\else
  \newcommand{\ORLfigurewidth}{0.80\textwidth}
  
\fi
\setlist[itemize]{leftmargin=1.35em,itemsep=1pt,topsep=2pt}
\algrenewcommand\algorithmicindent{1em}

\newtheorem{proposition}{Proposition}

\newcommand{\E}{\mathbb{E}}
\newcommand{\Var}{\operatorname{Var}}
\newcommand{\Cov}{\operatorname{Cov}}
\newcommand{\dd}{\,\mathrm{d}}
\newcommand{\ii}{\mathrm{i}}
\newcommand{\calN}{\mathcal{N}}

\begin{document}
\ifdefined\ORLtwocolumn\raggedbottom\fi
\begin{frontmatter}

\title{Efficient simulation schemes for pricing options \\
under the Ornstein--Uhlenbeck driven stochastic volatility model}

\author{Congxin He\corref{cor}}
\ead{cheam@connect.hkust-gz.edu.cn}
\author{Yue Kuen Kwok}
\address{Financial Technology Thrust, \\
Hong Kong University of Science and Technology (Guangzhou)}
\begin{abstract}
We develop an efficient Monte Carlo simulation scheme
for pricing options under the Ornstein--Uhlenbeck driven stochastic volatility model
via the operator splitting approach.
With an ingenious splitting of the governing stochastic differential equations, 
our operator splitting scheme admits analytic solutions in all sub-steps, 
so its implementation is simplified to require simulation 
of a few normal variates. This resolves 
the two typical numerical challenges in other simulation schemes, 
namely, sampling of conditional integrated variance and pathwise inverse
integral transform of characteristic functions. There are three 
pioneering simulation schemes that attempt to overcome the above 
two numerical challenges.  
These include the Hilbert interpolation scheme of \cite{ZengXuJiangKwok2023}, 
Karhunen--Loève expansion scheme of \cite{Choi2025} and moment matching 
scheme based on the Inverse Gaussian distribution of \cite{BrignoneSgarra2026}. 
We performed numerical tests to compare accuracy-speed performance 
of pricing options using our operator splitting scheme with these 
three pioneering schemes. We found that our scheme competes favorably 
well in terms of accuracy-speed tradeoff among all these schemes, 
in particular for pricing path dependent options with a large 
number of monitoring instants. The performance of our scheme can 
be well enhanced by martingale-preserving control variates and
variance reduction via conditioning.  
\end{abstract}

\begin{keyword}
Ornstein--Uhlenbeck model \sep stochastic volatility \sep operator splitting \sep Monte Carlo simulation \sep path dependent options
\sep Hilbert interpolation \sep Karhunen--Loève expansion
\end{keyword}

\end{frontmatter}
\ifdefined\ORLtwocolumn
  \setstretch{1.00}
\else
  \setstretch{1.05}
\fi

\section{Introduction}\label{sec:introduction}
\noindent
The Ornstein--Uhlenbeck (OU) driven stochastic volatility models 
belong to a class of financial models that uses an OU process to drive 
the volatility of asset returns \citep{SchobelZhu1999}. The OU 
process is a natural choice for this purpose since it is 
mean-reverting. This is a property widely observed in financial 
market volatility, where volatility tends to cluster around a 
long-term average and regress after extreme movements. The 
Schöbel-Zhu OU model permits correlation between the asset 
price and volatility (the leverage effect), which is crucial 
for capturing the skew and smile patterns observed in options 
markets. A key advantage of the OU-driven models is their analytical 
tractability. Though the OU driven stochastic volatility model 
is not affine, one can still manage to derive various types of 
closed form characteristic functions. Such analytical 
tractability leads to various designs of numerical schemes, 
either via Monte Carlo simulation or discrete Fourier inversion 
algorithms. Recently, \cite{BrignoneSgarra2026} showed that 
the OU driven model is more empirically accurate, confirming 
the general belief that nonaffine models perform better than the 
affine counterparts in calibration performance of traded option 
prices.  

This paper focuses on the design of a new simulation scheme for 
pricing options under the OU driven stochastic volatility model 
via the operator splitting approach. With an ingenious splitting 
of the governing stochastic differential equations, we 
manage to derive closed form solutions in all suboperators in 
the operator splitting procedure. As a result, the operator 
splitting scheme only requires simulation of a few normal variates.  
Next, we present a comprehensive comparison of our proposed 
simulation scheme with three most pioneering simulation schemes. 
These three schemes include the Hilbert interpolation scheme of 
\cite{ZengXuJiangKwok2023}, Karhunen--Loève (KL) expansion 
scheme of \cite{Choi2025} and moment matching scheme based on 
the Inverse Gaussian distribution of \cite{BrignoneSgarra2026}. 
There are other earlier simulation schemes that are not chosen 
in our comparison studies for reasons to be explained below. 
The standard biased Euler discretization scheme is known to 
suffer from the leaking correlation problem since volatility 
may assume negative value. \cite{vanHaastrechtLordPelsser2014} proposed 
a simulation scheme that uses the trapezoidal rule to compute 
the integrated variance. \cite{BrignoneSgarra2026} have shown that 
their moment matching scheme outperforms the trapezoidal rule. 
\cite{LiWu2019} proposed the first exact simulation scheme for 
the OU-driven stochastic volatility model. Their scheme relies on 
their nice derivation of the characteristic function of integrated 
variance conditional on terminal volatility and integrated volatility.
However, the sampling of conditional integrated variance is 
computationally intensive since it involves pathwise Laplace 
inversion of the characteristic function. 
\cite{BrignoneSgarra2026} proposed an innovative moment matching scheme 
that avoids such tedious Laplace inversion procedure, so their scheme 
is shown to outperform the Li-Wu scheme.  
Following a similar approach of Broadie--Kaya exact simulation scheme, 
\cite{ZengXuJiangKwok2023} obtained the characteristic function of 
the log asset price conditional on terminal volatility. 
They then adopted the Hilbert interpolation technique to mitigate 
the time-consuming inverse integral transform. \cite{Choi2025} proposed an exact 
simulation scheme based on the KL expansion for the OU bridge process,
represented as a sine series KL expansion and
with coefficients being independent normal variates. Integrated volatility 
and integrated variance can be analytically expressed as weighted 
sums of independent normal variates, completely avoiding numerical 
Fourier inversion. 

The objectives of our paper are summarized as follows. 
In the first objective, we construct our simulation scheme for pricing 
options under the OU model via the operator splitting approach. 
The scheme is seen to be easily implementable, reliable and compete 
well in accuracy-speed tradeoff. To optimize simulation efficiency, 
our scheme also includes variance reduction enhancements,
including conditioning and martingale-preserving control variates. 
The next objective involves comprehensive comparison studies of our 
operator splitting scheme with three other pioneering simulation 
schemes. In the design of our scheme, we avoid the two numerical 
challenges in earlier simulation schemes, namely, the sampling 
of conditional integrated variance and pathwise inverse integral transform of 
characteristic functions. To achieve better computational efficiency, 
all the four simulation schemes attempt to resolve the above two 
numerical challenges using different approaches. Our numerical tests show
that the operator splitting scheme gives
the best performance in pricing path dependent options with 
a large number of monitoring instants.                 

The remaining sections of the paper are organized as follows. 
In the next section, we present several important theoretical 
properties of the OU driven stochastic volatility model. 
These include the various forms of conditional characteristic 
functions and the sine series KL expansion of the OU bridge. 
In \Cref{sec:schemes}, we present the details of our proposed operator 
splitting scheme. The numerical implementation procedures of the 
other three simulation schemes are also briefly summarized for ease of 
reference and comparison. In particular, we show how these simulation 
schemes attempt to resolve the tediousness of sampling integrated variance 
and inverse integral transform. We also examine the order of 
complexity of the 4 simulation schemes. In \Cref{sec:tests}, we present 
comparison of numerical performance in terms of accuracy-speed 
tradeoff of the 4 simulation schemes in pricing European vanilla 
options, path dependent Asian options, barrier options and corridor 
swaps. We also discuss how to implement the efficiency enhancement 
techniques of conditioning and martingale-preserving control variates.
The paper is ended with 
conclusive remarks in the last section.

\section{Theoretical properties of Ornstein--Uhlenbeck model}\label{sec:theory}
\noindent
In this section, we present several theoretical properties of the 
OU driven stochastic volatility model that are relevant to 
the construction of various simulation schemes for pricing options. 
First, we show that the log asset price has a normal distribution 
conditional on terminal volatility, integrated volatility and 
integrated variance. The joint simulation of terminal volatility 
and integrated volatility can be performed easily via an analytic 
joint distribution. The integrated variance can be computed by 
the pathwise Laplace inversion of the conditional characteristic function 
of the integrated variance derived by \cite{LiWu2019}. 
Next, we present the characteristic function of the log asset 
price conditional on terminal volatility \citep{ZengXuJiangKwok2023}.
This alternative characteristic function is used in two different
type of numerical 
schemes. One use is the simulation of the log asset price 
by an inverse integral transform of the conditional characteristic 
function. Another use is in the construction of the 
recursion quadrature scheme via discrete Fourier inversion 
algorithm. Lastly, we discuss the infinite KL sine series 
expansion with normal variates. One can use the KL expansion to 
compute integrated volatility and integrated variance 
analytically, thus avoiding the tedious Laplace inversion of 
the corresponding characteristic function of conditional
integrated variance.

\subsection{Model formulation and analytic formulas of distribution functions}
\noindent
Following \cite{SchobelZhu1999}, under a filtered probability
space \([\Omega,\mathcal F,\{\mathcal F_t\}_{t\in[0,T]},Q]\), 
the dynamics of log asset price $X_t = \log S_t$ and 
its instantaneous volatility $\sigma_t$ under the 
risk neutral measure \(Q\) are governed by the 
following stochastic differential equations:
\begin{align}
\dd X_t&=\left(r-\frac12\sigma_t^2\right)\dd t
+\sqrt{1-\rho^2}\,\sigma_t\dd W_t^{(1)}
+\rho\sigma_t\dd W_t^{(2)}, \label{eq:model-x}\\
\dd\sigma_t&=\kappa(\theta-\sigma_t)\dd t+\xi\dd W_t^{(2)},\label{eq:model-v}
\end{align}
where $W^{(1)}$ and $W^{(2)}$ are independent Brownian motions. 
Here, $r$ is the risk free interest rate, $\rho$ is the 
instantaneous correlation parameter, $\kappa$ is the 
speed of mean reversion, $\theta$ is the long-term volatility,
$\xi$ is the volatility of volatility parameter. 
Over the time interval $[t,t+h]$, the integrated volatility and 
integrated variance are defined by
\begin{equation}
A_h=\int_t^{t+h}\sigma_s\dd s,
\qquad I_h=\int_t^{t+h}\sigma_s^2\dd s,\label{eq:A-I}
\end{equation}
respectively. Solving Eq.~\eqref{eq:model-v} analytically,
we obtain
\begin{equation}
\sigma_{t+h}=\theta+(\sigma_t-\theta)e^{-\kappa h}
+\xi\sqrt{\frac{1-e^{-2\kappa h}}{2\kappa}}Z,\qquad Z\sim\calN(0,1).\label{eq:ou-transition}
\end{equation}
\cite{LiWu2019} showed that $(\sigma_{t+h},A_h)$ is conditionally bivariate normal. Its conditional mean is
\begin{equation}
\begin{pmatrix}
\theta+(\sigma_t-\theta)e^{-\kappa h}\\[2pt]
\theta h+(\sigma_t-\theta)(1-e^{-\kappa h})/\kappa
\end{pmatrix},\label{eq:joint-mean}
\end{equation}
and the entries of the covariance matrix are
\begin{align}
V_{11}&=\frac{\xi^2}{2\kappa}(1-e^{-2\kappa h}),
&V_{21}=V_{12}&=\frac{\xi^2}{2\kappa^2}(1-e^{-\kappa h})^2,\nonumber\\
V_{22}&=\frac{\xi^2}{\kappa^2}\left[h-\frac{1-e^{-\kappa h}}{\kappa}
-\frac{(1-e^{-\kappa h})^2}{2\kappa}\right].\label{eq:joint-cov}
\end{align}
Direct integration of the following stochastic integral gives
\begin{equation}
\int_t^{t+h}\sigma_s\dd W_s^{(2)}=
\frac{\sigma_{t+h}^2-\sigma_t^2-\xi^2h-2\kappa\theta A_h+2\kappa I_h}{2\xi},
\label{eq:ito-correlation}
\end{equation}
which shows dependence on $A_h$ and $I_h$. Integrating 
Eq.~\eqref{eq:model-x} gives the analytical representation
of log asset return $X_{t+h}-X_t$ in terms of $\sigma_t, \sigma_{t+h}, A_h, I_h$.
More precisely, conditional on
$(\sigma_t,\sigma_{t+h},A_h,I_h)$, the increment $X_{t+h}-X_t$
is a normal distribution:
\begin{equation}
X_{t+h}-X_t\sim\calN(m_h,q_h),\label{eq:conditional-normal}
\end{equation}
where
\begin{align}
m_h={}&\left(r-\frac12\rho\xi\right)h
+\frac{\rho}{2\xi}(\sigma_{t+h}^2-\sigma_t^2)
-\frac{\rho\kappa\theta}{\xi}A_h
+\left(\frac{\rho\kappa}{\xi}-\frac12\right)I_h,\notag \\
q_h={}&(1-\rho^2)I_h.\notag
\end{align}
One can simulate \((\sigma_{t+h}, A_h)\) using the bivariate normal distribution in Eq.~\eqref{eq:joint-mean}.
To simulate \(X_{t+h}\), we can use the normal distribution with mean \(m_h\) and variance \(q_h\) in Eq.~\eqref{eq:conditional-normal}. 
However, both the mean and variance involve the integrated variance \(I_h\).
This reveals the numerical challenge of how to compute $I_h$ efficiently.
\subsection{Li-Wu exact simulation scheme}\label{sec:liwu}
\noindent
Let $I_h^c$ be defined as the integrated variance conditional on
\(\sigma_{t+h}\) and \(\int_t^{t+h} \sigma_s\, \dd s\), where
$$
I_h^c=\int_t^{t+h}\sigma_s^2\dd s\mid\sigma_{t+h},\int_t^{t+h} \sigma_s\, \dd s.
$$
\cite{LiWu2019} managed to derive the Laplace transform of 
$I_h^c$ as follows:
\begin{equation}
\mathcal L_h(\alpha)=
\E_t\!\left[e^{-\alpha \int_t^{t+h}\sigma_s^2\, \dd s}\mid \sigma_{t}, \sigma_{t+h},\int_t^{t+h} \sigma_s\, \dd s\right]
=\frac{f(\sqrt{\kappa^2+2\alpha\xi^2})}{f(\kappa)},\qquad \alpha\ge0,
\label{eq:liwu-laplace}
\end{equation}
where $\E_t$ is the expectation taken conditional on filtration \(\mathcal F_t\) and
\begin{equation}
f(x) = \frac{x^2}{2\pi\sqrt{\eta(x)}}
\exp\!\left(-\frac{g(x)}{2\xi^2\eta(x)}\right).
\end{equation}
For convenience, we write $v=\sigma_t$,
$w=\sigma_{t+h}$, $a=A_h$. The functions
$\eta(x)$ and $g(x)$ are given by
\begin{align*}
\eta(x)&=2-2\cosh xh+xh\sinh xh, \notag\\
g(x)&=2x^2[(v+w)a-vwh]
+x^2[(v^2+w^2)h-2(v+w)a]\cosh xh \notag\\
&\quad+x[x^2a^2-(w-v)^2]\sinh xh. \notag
\end{align*}

\noindent
The Li-Wu exact simulation scheme consists of three steps.
First, we draw \((\sigma_{t+h},A_h)\) from the known bivariate normal distribution.
Second, we compute the conditional integrated variance $I_h^c$ from the Laplace inversion of $\mathcal L_h(\alpha)$.
Third, we draw $X_{t+h}$ from the known normal distribution conditional on 
\(\sigma_t,\sigma_{t+h},A_h\) and \(I_h\).
The pathwise Laplace inversion of $\mathcal L_h(\alpha)$ is time consuming,
thus undermining the computational efficiency of the exact
simulation scheme. \cite{BrignoneSgarra2026} constructed their simulation scheme
based on moment matching via the analytic formula of $\mathcal{L}_{h}(\alpha)$.
They managed to avoid Laplace inversion of $\mathcal L_h(\alpha)$,
the details of which are presented in \Cref{sec:brignone}.

\subsection{Conditional characteristic function of log return}\label{sec:return-transform}
\noindent
\cite{ZengXuJiangKwok2023} derived the following characteristic function
of log return conditional on $\sigma_t$ and $\sigma_{t+h}$ as 
defined by
\begin{equation}
\Psi_h(u;v,w)=\E\!\left[e^{\ii u(X_{t+h}-X_t)}\mid\sigma_t=v,\sigma_{t+h}=w\right].
\label{eq:zengkwok}
\end{equation}
Let $z=\ii u$ and we define the following auxiliary functions:
\begin{align*}
d(z)
&=\sqrt{\kappa^2-(2\kappa\rho-\xi)\xi z-(1-\rho^2)\xi^2z^2},
&\qquad r_1(z)
&=\frac{\kappa-d(z)}{2\xi^2},\\
r_2(z)
&=-\frac{z\rho\kappa\theta}{\xi d(z)}
+\frac{\kappa^2\theta}{\xi^2d(z)}-\frac{\kappa\theta}{\xi^2},
&\qquad \widehat\theta(z)
&=\frac{\kappa\theta+\xi^2r_2(z)}{d(z)}.
\end{align*}
The conditional 
characteristic function of log asset return is given by \citep{ZhangZengKwok2023,ZengXuJiangKwok2023,HuKwok2025}
\begin{align}
\Psi_h(u;v,w)={}&\exp\!\Big(
\left(zr+\xi^2r_1+\tfrac12\xi^2r_2^2+\kappa\theta r_2
-\tfrac12z\xi\rho\right)h \nonumber\\[-2pt]
&\quad+\left(r_1-\frac{z\rho}{2\xi}\right)(v^2-w^2)+r_2(v-w)\Big)
\frac{p_h(w\mid v;d(z),\widehat\theta(z))}
{p_h(w\mid v;\kappa,\theta)}.\label{eq:conditional-cf-explicit}
\end{align}
Here, $p_h(w\mid v;a,b)$ represents the transition density over the 
time interval $h$ under the dynamic equation:
\(\dd V_s=a(b-V_s)\dd s+\xi\dd W_s\).  For real $a>0$ and
$b\in\mathbb R$, we obtain
\begin{align*}
\nu_a^2(h)&=\frac{\xi^2}{2a}\bigl(1-e^{-2ah}\bigr),\\
p_h(w\mid v;a,b)
&=\frac{1}{\sqrt{2\pi\nu_a^2(h)}}
\exp\!\left(-\frac{[w-b-(v-b)e^{-ah}]^2}{2\nu_a^2(h)}\right).
\end{align*}

\cite{ZengXuJiangKwok2023} constructed an alternative exact simulation 
scheme based on the inverse integral transform of the above 
conditional characteristic function. They chose to use the 
Hilbert inversion transform instead of the common Laplace inversion transform. 
To minimize computational efforts, they designed an efficient 
interpolation scheme that mitigates the Hilbert inversion 
calculation for all simulation paths. The details of their 
Hilbert interpolation scheme can be found in \cite{ZengXuJiangKwok2023}.
As a remark, the same conditional characteristic function $\Psi_h(u;v,w)$ is also used 
in the construction of the recursion quadrature scheme 
for pricing path dependent options using
the discrete Fourier inversion transform \citep{ZhangZengKwok2023}.

\subsection{Karhunen--Loève expansion}\label{sec:choi-theory}
\noindent
Analytical tractability of integrated volatility $A_T$ 
and integrated variance $I_T$ is explored via
the KL expansion of the OU bridge process.
The OU bridge is a Gaussian process that admits 
an infinite sine series expansion, thus enabling both $A_T$ and 
$I_T$ to be analytically integrable. 
Following \citet{Choi2025}, we write $\bar\sigma_t=\sigma_t-\theta$ and define
\(\widehat\sigma_T=\bar\sigma_T-\bar\sigma_0e^{-\kappa T}.\)
Conditional on $\widehat\sigma_T$, the OU bridge admits the KL expansion
\begin{equation}
\bar\sigma_t=\bar\sigma_0e^{-\kappa t}
+\widehat\sigma_T\frac{\sinh\kappa t}{\sinh\kappa T}
+\xi\sqrt T\sum_{n=1}^{\infty}a_n
\sin\frac{n\pi t}{T} Z_n,
\quad
a_n=\sqrt{\frac{2}{(\kappa T)^2+(n\pi)^2}},
\label{eq:kl-bridge}
\end{equation}
where $Z_n's$ are independent standard normal variates. 
Analytically, $A_T$ becomes a linear Gaussian series and $I_T$ becomes 
\begin{equation}
I_T=c_0+c(\widehat\sigma_T)^\top Z
+\frac12\xi^2T\sum_{n=1}^{\infty}a_n^2(Z_n^2-1).
\end{equation}
The functionals $A_T$ and $I_T$ are thereby
replaced by linear and quadratic functions of standard normal variates.
In actual implementation, these infinite series 
expansions of \(A_T\) and \(I_T\) have to be 
truncated to a finite number of terms. The simulation 
scheme proposed by \cite{Choi2025} takes advantage of 
the relative ease of computing \(A_T\) and \(I_T\) 
via the KL expansion, the details of which are 
presented in \Cref{sec:choi-scheme}.



\section{Simulation schemes}\label{sec:schemes}
\noindent
In this section, we first present our proposed simulation 
scheme that is derived based on the operator splitting 
approach. Its implementation completely avoids the evaluation of 
integrated variance and inverse integral transform of 
characteristic function. We also summarize the innovative 
procedures used in other simulation schemes that are designed to 
overcome these two numerical challenges in order to enhance computational 
efficiency.

\subsection{Operator splitting scheme}\label{sec:splitting}
\noindent
Operator splitting is a powerful numerical technique for 
solving complex time dependent differential equations by 
decomposing the original operator into simpler suboperators, 
which are then solved successively. The operator splitting 
approach has been used successfully for the lifted Heston and
Barndorff--Nielsen models \citep{HuHeKwokZhang2026}. Related sequential
simulation ideas have also been applied to the SABR model
\citep{ChoiHuKwok2026}. Here, we show how to choose
an ingenious splitting of the stochastic 
differential equations of the OU model into suboperators
such that sampling of \(I_h\) and inverse integral transform 
of characteristic function are avoided.

We choose to split Eqs.~\eqref{eq:model-x} and \eqref{eq:model-v} into the following 
suboperators:
\begin{subequations}\label{eq:split-subsystems}
\ifdefined\ORLtwocolumn
\begin{align}
\dd X_t^{\mathrm{I}}&=\left(r-\tfrac12(\sigma_t^{\mathrm{I}})^2\right)\dd t
+\sqrt{1-\rho^2}\,\sigma_t^{\mathrm{I}}\dd W_t^{(1)},\nonumber\\
\dd\sigma_t^{\mathrm{I}}&=\kappa(\theta-\sigma_t^{\mathrm{I}})\dd t,\label{eq:subsystem-one}\\
\dd X_t^{\mathrm{II}}&=\rho\sigma_t^{\mathrm{II}}\dd W_t^{(2)},\nonumber\\
\dd\sigma_t^{\mathrm{II}}&=\xi\dd W_t^{(2)}.\label{eq:subsystem-two}
\end{align}
\else
\begin{align}
\dd X_t^{\mathrm{I}}&=\left(r-\tfrac12(\sigma_t^{\mathrm{I}})^2\right)\dd t
+\sqrt{1-\rho^2}\,\sigma_t^{\mathrm{I}}\dd W_t^{(1)},
&\dd\sigma_t^{\mathrm{I}}&=\kappa(\theta-\sigma_t^{\mathrm{I}})\dd t,
\label{eq:subsystem-one}\\
\dd X_t^{\mathrm{II}}&=\rho\sigma_t^{\mathrm{II}}\dd W_t^{(2)},
&\dd\sigma_t^{\mathrm{II}}&=\xi\dd W_t^{(2)}.
\label{eq:subsystem-two}
\end{align}
\fi
\end{subequations}
When we integrate Eq.~\eqref{eq:subsystem-one} over $[t, t+h]$,
we obtain
\begin{subequations}\label{eq:split}
\begin{equation}
\ifdefined\ORLtwocolumn
\begin{aligned}
\sigma_{t+h}^{\mathrm{I}}&=D(v,h),\\
X_{t+h}^{\mathrm{I}}&=x+rh-\tfrac12Q(v,h)
+\sqrt{(1-\rho^2)Q(v,h)}Z_1.
\end{aligned}
\else
\begin{aligned}
\sigma_{t+h}^{\mathrm{I}}&=D(v,h),
&\qquad X_{t+h}^{\mathrm{I}}&=x+rh-\tfrac12Q(v,h)
+\sqrt{(1-\rho^2)Q(v,h)}Z_1,
\end{aligned}
\fi
\label{eq:split-a}
\end{equation}
where $x = X_t, v = \sigma_t$ and $Z_1$ is the standard normal variate.
The functions are
\ifdefined\ORLtwocolumn
\begin{align*}
D(v,h)&=\theta+(v-\theta)e^{-\kappa h},\\
Q(v,h)&=\int_0^h D(v,s)^2\dd s =\theta^2h+\frac{2\theta(v-\theta)}{\kappa}(1-e^{-\kappa h})
\nonumber\\
&\qquad+\frac{(v-\theta)^2}{2\kappa}(1-e^{-2\kappa h}).
\end{align*}
\else
\begin{align*}
D(v,h)&=\theta+(v-\theta)e^{-\kappa h},\\
Q(v,h)&=\int_0^h D(v,s)^2\dd s
=\theta^2h+\frac{2\theta(v-\theta)}{\kappa}(1-e^{-\kappa h})
+\frac{(v-\theta)^2}{2\kappa}(1-e^{-2\kappa h}).
\end{align*}
\fi
Integrating Eq.~\eqref{eq:subsystem-two} over $[t, t+h]$ gives
\begin{equation}
\ifdefined\ORLtwocolumn
\begin{aligned}
\sigma_{t+h}^{\mathrm{II}}&=v+\xi\sqrt h Z_2,\\
X_{t+h}^{\mathrm{II}}&=x+\frac{\rho}{2\xi}
\big[(\sigma_{t+h}^{\mathrm{II}})^2-v^2-\xi^2h\big].
\end{aligned}
\else
\begin{aligned}
\sigma_{t+h}^{\mathrm{II}}&=v+\xi\sqrt h Z_2,
&\qquad X_{t+h}^{\mathrm{II}}&=x+\frac{\rho}{2\xi}
\big[(\sigma_{t+h}^{\mathrm{II}})^2-v^2-\xi^2h\big],
\end{aligned}
\fi
\end{equation}
\end{subequations}
where $Z_2$ is a standard normal variate.

The Strang-Marchuk splitting procedure advances 
the solution $(X_t, \sigma_t)$ from \(t\) to \(t+h\) into three 
substeps: half step \(\frac{h}{2}\) using 
suboperator I, full step \(h\) using suboperator 
II, half step \(\frac{h}{2}\) using suboperator I. 
Such splitting can be shown to be second order 
\(O(h^2)\) in global error. The proof of 
convergence of the second order Strang-Marchuk 
splitting can be established following a 
similar proof in \cite{HuHeKwokZhang2026}.

By the Strang-Marchuk operator splitting 
procedure, we obtain the numerical implementation 
of the operator splitting scheme as follows:
\begin{equation}
\begin{aligned}
(\tilde X,\tilde\sigma)
&=
\left(
X^{I}(X_t,\sigma_t,\frac{h}{2}),
\sigma^{I}(\sigma_t,\frac{h}{2})
\right),\\
(\hat X,\hat\sigma)
&=
\left(
X^{II}(\tilde X,\tilde\sigma,h),
\sigma^{II}(\tilde\sigma,h)
\right),\\
(X_{t+h}^{\mathrm{split}},\sigma_{t+h}^{\mathrm{split}})
&=
\left(
X^{I}(\hat X,\hat\sigma,\frac{h}{2}),
\sigma^{I}(\hat\sigma,\frac{h}{2})
\right).
\end{aligned}
\label{eq:strang-splitting}
\end{equation}
By combining the solutions of \(X^I,\sigma^I,X^{II}\) and \(\sigma^{II}\) in Eq.~(\ref{eq:split}a,b) 
and the above splitting procedure in Eq.~\eqref{eq:strang-splitting}, we can derive the explicit formulation of \((X_{t+h}^{\mathrm{split}},\sigma_{t+h}^{\mathrm{split}})\) 
as stated in \Cref{prop:splitting-update}.
\begin{proposition}\label{prop:splitting-update}
The Strang–Marchuk operator splitting scheme advances the solution from \((X_t,\sigma_t)\) to \((X_{t+h}^{split},\sigma_{t+h}^{split})\)
as follows:
\begin{equation}
\begin{split}
X_{t+h}^{\rm split}={}&X_t+rh-\frac12I_t^{\rm split,h}
+\frac{\rho}{2\xi}
\{\widehat\sigma^2-\widetilde\sigma^2-\xi^2h\}
+\sqrt{(1-\rho^2)I_t^{\rm split,h}}Z_1,\\
\sigma_{t+h}^{\rm split}={}&D(\widehat\sigma,h/2),
\end{split}
\label{eq:splitting-update}
\end{equation}
where $Z_1$ and $Z_2$ are independent standard normal variates, and 
\begin{align*}
\widetilde\sigma=D(\sigma_t,h/2),
\quad \widehat\sigma=\widetilde\sigma+\xi\sqrt h Z_2,
\quad I_{t}^{\rm split,h}=Q(\sigma_t,h/2)+Q(\widehat\sigma,h/2).
\end{align*}
\end{proposition}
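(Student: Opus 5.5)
The plan is to compose the three exact sub-step solutions in Eq.~\eqref{eq:split} along the Strang--Marchuk sequence \eqref{eq:strang-splitting} and collect terms.

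\textbf{Volatility component.} Suboperator I is deterministic in $\sigma$, so the first half step gives $\widetilde\sigma=D(\sigma_t,h/2)$. Suboperator II then gives $\widehat\sigma=\widetilde\sigma+\xi\sqrt h Z_2$. The final half step gives $\sigma^{\rm split}_{t+h}=D(\widehat\sigma,h/2)$, which is the second line of \eqref{eq:splitting-update}.

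\textbf{Log-price component.} I track the three increments in turn.
\begin{enumerate}
\item First half step: by Eq.~\eqref{eq:split-a} with $v=\sigma_t$, we get $\tilde X=X_t+\tfrac{r h}{2}-\tfrac12 Q(\sigma_t,h/2)+\sqrt{(1-\rho^2)Q(\sigma_t,h/2)}\,Z_1'$.
\item Full step of suboperator II, started at $(\tilde X,\widetilde\sigma)$: this adds $\frac{\rho}{2\xi}\{\widehat\sigma^2-\widetilde\sigma^2-\xi^2h\}$.
\item Second half step of suboperator I, started at $(\hat X,\widehat\sigma)$: this adds $\tfrac{rh}{2}-\tfrac12Q(\widehat\sigma,h/2)+\sqrt{(1-\rho^2)Q(\widehat\sigma,h/2)}\,Z_1''$.
\end{enumerate}
Summing the three increments, the drift terms give $rh$. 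The $Q$ terms give $-\tfrac12 I^{\rm split,h}_t$ with $I^{\rm split,h}_t=Q(\sigma_t,h/2)+Q(\widehat\sigma,h/2)$. The Itô-correlation term appears exactly as stated.

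\textbf{Merging the Gaussian terms.} Here $Z_1'$ and $Z_1''$ come from disjoint Brownian increments of $W^{(1)}$, so they are independent of each other and of $Z_2$, which is generated by $W^{(2)}$. Conditional on $Z_2$ (equivalently on $\widehat\sigma$), the two coefficients $\sqrt{(1-\rho^2)Q(\sigma_t,h/2)}$ and $\sqrt{(1-\rho^2)Q(\widehat\sigma,h/2)}$ are deterministic. The sum of the two terms is therefore conditionally normal with mean zero and variance $(1-\rho^2)I^{\rm split,h}_t$. Since this conditional law does not depend on anything beyond $\widehat\sigma$, the sum equals $\sqrt{(1-\rho^2)I^{\rm split,h}_t}\,Z_1$ in distribution, jointly with $Z_2$. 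Here $Z_1$ is a standard normal independent of $Z_2$, as claimed.

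\textbf{Main obstacle.} The algebra is routine. The point needing care is that the proposition is an identity in joint law of $(X^{\rm split}_{t+h},\sigma^{\rm split}_{t+h})$ given $(X_t,\sigma_t)$, not a pathwise identity. To justify it, I will verify that the conditional characteristic function of the pair given $Z_2$ coincides for both representations, and then integrate over $Z_2$.

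I will also note that $Q(v,s)\ge0$, since it is the integral of a square, so the square root is well defined.
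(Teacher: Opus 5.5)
Your proposal is correct and follows essentially the same route as the paper: you compose the three exact sub-step solutions along the Strang--Marchuk sequence, and then, conditional on $Z_2$, you merge the two independent Gaussian terms $\sqrt{(1-\rho^2)Q(\sigma_t,h/2)}\,Z_1'+\sqrt{(1-\rho^2)Q(\widehat\sigma,h/2)}\,Z_1''$ into $\sqrt{(1-\rho^2)I_t^{\rm split,h}}\,Z_1$, with $Z_1$ independent of $Z_2$. Your remark that this is an identity in joint law rather than a pathwise one, checked via the conditional characteristic function given $Z_2$, is a slightly more careful rendering of the paper's $\overset{d}{=}$ step.
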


\begin{proof}
Write $Q_1=Q(\sigma_t,h/2)$ and $Q_2=Q(\widehat\sigma,h/2)$.
The intermediate values 
$\widetilde\sigma$, $\widehat\sigma$, $\widetilde X$, and $\widehat X$ 
involved in 
the two half-steps of suboperator I and the full step of
suboperator II are found to be 
\ifdefined\ORLtwocolumn
\begin{align*}
\widetilde\sigma&=D(\sigma_t,h/2),\\
\widetilde X&=X_t+\frac{rh}{2}-\frac{Q_1}{2}
 +\sqrt{(1-\rho^2)Q_1}\,\varepsilon_1,\\
\widehat\sigma&=\widetilde\sigma+\xi\sqrt h\,Z_2,\\
\widehat X&=\widetilde X+\frac{\rho}{2\xi}
 (\widehat\sigma^2-\widetilde\sigma^2-\xi^2h),\\
\sigma_{t+h}^{\rm split}&=D(\widehat\sigma,h/2),\\
X_{t+h}^{\rm split}&=\widehat X+\frac{rh}{2}-\frac{Q_2}{2}
 +\sqrt{(1-\rho^2)Q_2}\,\varepsilon_2.
\end{align*}
\else
\begin{align*}
\widetilde\sigma&=D(\sigma_t,h/2),
&\widetilde X&=X_t+\frac{rh}{2}-\frac{Q_1}{2}
 +\sqrt{(1-\rho^2)Q_1}\,\varepsilon_1,\\
\widehat\sigma&=\widetilde\sigma+\xi\sqrt h\,Z_2,
&\widehat X&=\widetilde X+\frac{\rho}{2\xi}
 (\widehat\sigma^2-\widetilde\sigma^2-\xi^2h),\\
\sigma_{t+h}^{\rm split}&=D(\widehat\sigma,h/2),
&X_{t+h}^{\rm split}&=\widehat X+\frac{rh}{2}-\frac{Q_2}{2}
 +\sqrt{(1-\rho^2)Q_2}\,\varepsilon_2,
\end{align*}
\fi
where $\varepsilon_1$, $\varepsilon_2$ and $Z_2$ are independent standard
normal variables. Conditional on $Z_2$, the two independent price normal
observe
\[
\sqrt{Q_1}\,\varepsilon_1+\sqrt{Q_2}\,\varepsilon_2
\overset{d}{=}\sqrt{Q_1+Q_2}\,Z_1,
\]
with $Z_1$ being independent of $Z_2$. Combining these intermediate
values and together with $I_t^{\rm split,h}=Q_1+Q_2$ gives
Eq.~\eqref{eq:splitting-update}.
\end{proof}

Eq.~\eqref{eq:splitting-update} involves two normal
variates and elementary functions only. 
For pricing path dependent options with multiple monitoring instants, 
we may divide each time interval over successive monitoring 
instants by \(q\) substeps. Assuming uniform time intervals, 
we have \(qh=T/M\), where \(T\) is the time horizon of the option 
and \(M\) is the number of monitoring instants. For pricing a 
European option, where \(M=1\), we may choose \(q\) to be 8 in order 
to achieve better accuracy since the operator splitting scheme is 
$\mathcal{O}(h^2)$ in global error. For pricing a path dependent 
option with $M=12$ or above, it suffices to choose \(q=1\) for 
sufficient accuracy.




\subsection[Low-bias simulation scheme]{Low-bias simulation scheme via moment matching}\label{sec:brignone}
\noindent
To overcome the numerical bottleneck in the Laplace inversion of 
\(\mathcal{L}_h(\alpha)\) in the Li-Wu exact simulation scheme, 
\cite{BrignoneSgarra2026} proposed to use the inverse Gaussian 
(IG) distribution in the approximate evaluation of conditional 
integrated variance \(I_h^c\) via moment matching method. This is 
done by matching the first two conditional moments of the IG 
distribution with those implied by \(\mathcal{L}_h{(\alpha)}\).

More precisely, conditional on
$(\sigma_t,\sigma_{t+h},\int_t^{t+h}\sigma_s \, \dd s)$, the mean and variance of $I_h^c$ are obtained by
differentiating the logarithm of $\mathcal{L}_h(\alpha)$:
\begin{equation}
\mu_I=-\left.\frac{\partial}{\partial\alpha}
\log\mathcal L_h(\alpha)\right|_{\alpha=0},\qquad
v_I=\left.\frac{\partial^2}{\partial\alpha^2}
\log\mathcal L_h(\alpha)\right|_{\alpha=0}.
\label{eq:ig-moments}
\end{equation}
Closed-form expressions for these derivatives are given in
\cite{BrignoneSgarra2026}. Given the moments $\mu_I$ and $v_I$, we approximate 
$I_h^c$ by a two-moment matched IG distribution with mean $\nu$ and shape parameter $\lambda$.
Matching
the two moments gives
$\nu = \mu_I$ and
$\lambda = \mu_I^3/v_I$. The density function of the approximating IG 
distribution is given by
\begin{equation}
f_{IG}(y)=\left(\frac{\lambda}{2\pi y^3}\right)^{1/2}
\exp\!\left(-\frac{\lambda(y-\mu_I)^2}{2\mu_I^2y}\right),
\qquad y>0.
\label{eq:IG_distribution}
\end{equation}
The simulation scheme of \cite{BrignoneSgarra2026} based on moment matching mimics
closely that of the Li--Wu exact simulation scheme except that sampling of conditional
integrated variance \(I_h^c\) via Laplace inversion of \(\mathcal{L}_h(\alpha)\) is replaced by sampling 
of $\mathcal{L}_h^{\alpha}$ using the approximating IG distribution in Eq.~\eqref{eq:IG_distribution}. 

\subsection[Simulation based on the KL expansion]{Simulation scheme based on the Karhunen-Loève expansion}\label{sec:choi-scheme}
\noindent
We have seen in \Cref{sec:choi-theory} that the path integrals of 
integrated volatility and integrated variance can be expressed as 
infinite linear and quadratic series involving normal variates. In 
actual implementation, one has to truncate these infinite series as 
finite sums. The numerical challenge is to estimate the magnitudes 
of the omitted terms in truncation.

We present a brief description of the numerical procedure used by
\cite{Choi2025}. Consider a generic time interval $[t, t+h]$ of 
length $h$. We write $\bar{\sigma}_t=\sigma_t-\theta$, $a_n=\sqrt{\frac{2}{\kappa^2 h^2+n^2\pi^2}}$ 
and $\phi(x)=\frac{1-e^{-x}}{x}$. Let $Z_0$ be a standard
normal variate. We draw $\widehat\sigma_h=\xi\sqrt{h\phi(2\kappa h)}Z_0$ and 
$\bar\sigma_h=\bar{\sigma}_t e^{-\kappa h}+\widehat\sigma_h$.
We define $\bar{U}_h = \frac1h \int_t^{t+h} (\sigma_s-\theta) \, \dd s$ and 
$\bar{V}_h = \frac1h \int_t^{t+h} (\sigma_s-\theta)^2 \, \dd s$. Choi chose to retain an even
number $L$ terms in the two infinite series. The omitted linear terms are jointly Gaussian and can be
sampled exactly from their covariance matrix.  The omitted quadratic terms 
do not fit into any distribution. As an approximation, Choi used the
moment matching method to fit with a convenient choice of distribution. Conditional 
on $\widehat\sigma_h$, the infinite series of center adjusted integrated volatility 
$\bar{U}_h$ and integrated variance $\bar{V}_h$ are split into finite sums and remainder
terms as follows:
\ifdefined\ORLtwocolumn\begingroup\scriptsize\fi
\begin{align*}
\bar U_h={}&\left(\bar{\sigma}_t+\frac{\widehat\sigma_h}{1+e^{-\kappa h}}\right)
\phi(\kappa h)+2\xi\sqrt h
\left(\sum_{\substack{n\le L\\n\text{ odd}}}
\frac{a_n}{n\pi}Z_n+G_L\right),\\
\bar V_h={}&\E(\bar V_h\mid\widehat\sigma_h)
+\xi\sqrt h\left[\bar{\sigma}_t\left(\sum_{n\le L}n\pi a_n^3Z_n+P_L+Q_L\right)
\right.\\[-2pt]
&\left.\qquad+\bar\sigma_h\left(\sum_{n\le L}(-1)^{n-1}n\pi a_n^3Z_n
+P_L-Q_L\right)\right]
+\frac{\xi^2h}{2}\left(\sum_{n\le L}a_n^2(Z_n^2-1)+R_L\right),
\end{align*}
where
\begin{align*}
\E(\bar V_h\mid\widehat\sigma_h)={}&\bar{\sigma}_t^2\phi(2\kappa h)
+\widehat\sigma_h^2\frac{\sinh2\kappa h-2\kappa h}
{4\kappa h\sinh^2\kappa h}
+\frac{\xi^2}{2\kappa}\left(\coth\kappa h-\frac1{\kappa h}\right)\\
&+\frac{b_0\widehat\sigma_he^{-\kappa h}}{\kappa h}
\left(\frac1{\phi(2\kappa h)}-1\right).
\end{align*}
\ifdefined\ORLtwocolumn\endgroup\fi
The details of calculating the remainders $G_L$, $P_L$, $Q_L$, and $R_L$ can be found 
in \cite{Choi2025}.

\subsection{Computational complexity}\label{sec:complexity}
\noindent
As a simplified approach of assessing computational efficiency, 
we may compare the order of complexity of different simulation 
schemes by counting the required number of sampling of random 
realization from a normal distribution and a uniform distribution. 
This simplified approach neglects related computational steps of 
calculating various functions embedded in the simulation algorithm, 
which may be more time consuming than sampling the normal / uniform 
variates. Let us consider pricing a path dependent option with 
\(M\) monitoring instants, assuming uniform spacing of monitoring 
instants for simplicity. Let \(N\) be the number of simulation 
paths used in the option pricing calculation.

In our proposed operator splitting scheme, we may use \(q\) time steps 
between successive monitoring instants. Since each time step 
requires sampling of 2 normal variates, so the total number of 
sampling of normal variates is \(2qNM\). For a path dependent 
option with monthly monitoring frequency, the choice of \(q=1\) 
serves well to give sufficiently low bias. Since it is common to 
have weekly or even daily monitoring frequency, so the choice 
of \(q=1\) is appropriate for most path dependent option pricing 
calculations. In the case of European options, where \(M=1\), 
we may choose the step size \(h\) to be around \(0.1\). 
Together with the inherent simplicity in the algorithm that 
involves minimal related calculations, the operator 
splitting scheme is seen to be highly efficient for pricing 
path dependent options. It remains to be efficient for pricing 
a European option with maturity shorter than one year.

For the Brignone-Sgarra simulation scheme, it is seen that 
each simulation path requires sampling four normal variates 
and one uniform variate. Therefore, pricing of a path dependent 
option requires sampling of \(4NM\) normal variates and \(NM\) 
uniform variates. However, the moment matching algorithm involves 
many more related computational procedures than those in the 
operator splitting scheme, so it is less competitive among 
the two simulation schemes. For pricing European options, 
both schemes share roughly the same level of efficiency 
when \(q\) is around 8 in the operator splitting scheme.

The Choi simulation scheme is least competitive among the 
three simulation schemes. Let \(L\) be the number of 
terms retained in the finite sums of the linear and quadratic 
terms in the algorithm. It is seen that the Choi simulation 
scheme requires sampling of \((6+L)NM\) normal variates, 
plus higher level of tediousness in related computational procedures 
in the algorithm.

Lastly, it is hard to assess the computational complexity in 
terms of sampling of normal / uniform variates for 
the Hilbert interpolation scheme. The scheme involves  
tedious computation in the Hilbert inversion in a simulation run 
and the construction of the grid values in the interpolation table. 
The implementation procedures are most complicated among 
the four simulation schemes. In \Cref{sec:tests}, our numerical 
tests do show that the Hilbert interpolation scheme still 
exhibits accuracy-speed tradeoff close to those of the 
Brignone-Sgarra scheme and Choi scheme for pricing path 
dependent options.
\section{Numerical tests}\label{sec:tests}
\noindent
In this section, we report the numerical tests that were performed to compare accuracy-speed tradeoffs among the four simulation schemes: (i) operator splitting scheme (labeled “splitting”), (ii) Brignone-Sgarra’s scheme of moment matching with inverse Gaussian distribution (labeled “IG”), (iii) Choi’s scheme using the Karhunen-Loève expansion (labeled “KL”), and (iv) Zeng et al.’s Hilbert interpolation scheme (labeled “Hilbert”) for pricing European vanilla call option, Asian call option, down-and-in barrier call option, and corridor variance swap.
In all numerical tests, we employed two standard variance reduction techniques: conditional Monte Carlo simulation and martingale-preserving control variate. The implementation of these two techniques may differ from one simulation scheme to another one as well as from one option product to another one. As shown in our numerical tests, the reduction in variance can be quite substantial.
Unless otherwise stated, the model parameter values of the options used in our simulation studies follow those in \cite{LiWu2019} and \cite{Choi2025}:
\(
S_0=100, K=100, T=1, \sigma_0=0.2,
\theta=0.2, \kappa=4, \xi=0.1, \rho=-0.7, r=0.09531.
\)

First, we state explicitly how we calculate root-mean-square error (labelled “RMSE”) in our simulation runs. Let \(N\) be the number of simulation paths used in a simulation run of 
calculating an estimate. The simulation run of using \(N\) simulation paths is replicated \(R\) times. In our numerical tests, we used
\(
N=1{,}000,\ 2{,}000,\ 5{,}000,\ 10{,}000,\ 100{,}000
\) and \(R=200\). For a given choice of \(N\), let \(\widehat V_{N,j}\) be the estimate in the $j^{\text{th}}$ replication, \(j=1,2,\ldots,R\). The mean of the \(R\) replications of choosing \(N\) simulation paths is
\(
\frac{1}{R}\sum_{j=1}^{R}\widehat V_{N,j}.
\) Let \(V_{\rm ref}\) be the reference true value of the option. We then have
\begin{align}
\widehat{\operatorname{Bias}}_N
 &=\overline V_N-V_{\rm ref},\quad 
\widehat{\operatorname{SD}}_N=\sqrt{\frac{1}{R}\sum_{j=1}^{R}
 (\widehat V_{N,j}-\overline V_N)^2},\nonumber\\
\widehat{\operatorname{RMSE}}_N
 &=\sqrt{\frac1R\sum_{j=1}^{R}
 (\widehat V_{N,j}-V_{\rm ref})^2}
=\sqrt{\widehat{\operatorname{Bias}}_N^2
+\widehat{\operatorname{SD}}_N^2}.
\label{eq:rmse}
\end{align}

In assessing accuracy-speed tradeoff, we use RMSE as the proxy for accuracy and CPU as the proxy for speed. The CPU is taken to be the average of the CPU required for each estimate calculation over \(R\) replications. For simplicity, we neglect the CPU consumed in preliminary calculations, benchmark construction and other one-time computational work. For example, the one-time construction of the interpolation table in the Hilbert interpolation scheme is not included. As a remark, the Hilbert interpolation table was constructed using 2048 Fourier terms with frequency spacing 0.1.

In brief, conditional Monte Carlo method is a powerful variance reduction technique that replaces random function evaluations with their conditional expectations. The core idea is to integrate out part of the randomness analytically, thereby reducing estimator variance without introducing bias. For the variance reduction technique of martingale preserving control variate, the core idea is to construct control variates that are martingales, ensuring that the estimator remains unbiased while reducing variance. In our later exposition, we illustrate how to use these two variance reduction techniques in pricing various types of options using the four different simulation schemes listed above.
\subsection{European vanilla call option}\label{sec:european}
\noindent
The terminal payoff of a European vanilla call option is $e^{-rT}(S_T-K)^+$, where $S_T$ is 
the terminal asset price and $K$ is the 
strike price. Recall that $X_T = \ln S_T$.
At the $i^{\text{th}}$ simulation path, the three 
simulation schemes of splitting, IG,
and KL first simulate the volatility-related quantities that determine the
conditional mean $Y_i$ and conditional variance $V_i$ of $X_T$.  Given these two
quantities, the only remaining randomness in $X_T$ is the independent Brownian term
 $W^{(1)}$. As deduced from Eq.~\eqref{eq:conditional-normal}, the $i^{\text{th}}$
 simulated value of $X_T$ is given by
\[
X_{T, i}=Y_i+\sqrt{V_i}Z_i,\qquad
Z_i\sim\mathcal N(0,1), \qquad i = 1,2,\ldots,N
\]
where the standard Brownian variate $Z_i$ is independent of the simulated volatility-related quantities.
Let $H_i=e^{-rT}(e^{X_{T, i}}-K)^+$ denote the $i^{\text{th}}$ simulated terminal payoff that would be obtained with
sampling of $Z_i$ included.  Variance reduction via conditioning replaces $H_i$ by the conditional expectation
$\E[H_i\mid Y_i,V_i]$.  The tower property leaves the option value unchanged,
while the conditional-variance identity
\[
\Var(H_i)=\E\!\left[\Var(H_i\mid Y_i,V_i)\right]
          +\Var\!\left(\E[H_i\mid Y_i,V_i]\right)
\]
shows that the conditional estimator cannot have larger variance.  This conditioning method
is particularly convenient here because the corresponding conditional expectation is
available in closed form. Indeed, conditional on $(Y_i,V_i)$, the asset price has the same
law for the terminal value as the Black--Scholes model with total variance $V_i$.
Here, $V_i=(1-\rho^2)I_T^{\rm split}$ for splitting and
$V_i=(1-\rho^2)I_T$ for IG and KL.  With conditional forward price
$F_i=\exp(Y_i+V_i/2)$, integrating out $Z_i$ gives the simulated 
option value $P_i$:
\begin{equation}
P_i=e^{-rT}\operatorname{BSCall}(F_i,V_i;K),
\label{eq:euro-cmc}
\end{equation}
where BSCall is the analytic Black--Scholes call price formula:
\ifdefined\ORLtwocolumn
\begin{align*}
\operatorname{BSCall}(F,V;K)&=F\Phi(d_1)-K\Phi(d_2),\\
d_1&=\frac{\log(F/K)+V/2}{\sqrt V},
\qquad d_2=d_1-\sqrt V.
\end{align*}
\else
\begin{align*}
\operatorname{BSCall}(F,V;K)&=F\Phi(d_1)-K\Phi(d_2),
\quad d_1=\frac{\log(F/K)+V/2}{\sqrt V},
\quad d_2=d_1-\sqrt V.
\end{align*}
\fi
Here, $\Phi$ denotes the cumulative distribution function of a standard normal
variate.
Conditioning removes the residual randomness of $Z_i$. Before the resulting estimates are
averaged, we further apply the martingale correction of
\citet{Choi2025} to obtain the conditional forwards:
\begin{equation}
F_i\longmapsto\omega_NF_i,\qquad
\omega_N=\frac{S_0e^{rT}}{N^{-1}\sum_{i=1}^N F_i}.
\label{eq:multiplicative-correction}
\end{equation}
This enforces the martingale condition $\frac{\omega_N}{N}\sum_{i=1}^N F_i=e^{rT}S_0$ within each replication.
The variance reduction involves two steps:
conditioning removes conditional randomness, and imposition of the 
subsequent martingale correction.

The Hilbert scheme returns $X_{T,i}$ rather than the pair $(Y_i,V_i)$ in each
simulation path.  The above Brownian integration and forward rescaling
are not applicable for this simulation scheme.  We instead reduce its sampling
variance by an additive control variate.  A control variate $C_i$ is a simulated
quantity that is correlated with the simulated terminal option payoff $P_i$ and has a known mean.  For
the Hilbert scheme, we let
\[
P_i=e^{-rT}(S_{T,i}-K)^+, \quad
C_i=e^{-rT}S_{T,i}.
\]
The discounted asset price is a martingale, so $c_0=\E[C_i]=S_0$.  A high simulated
terminal asset price produces both a high call payoff and a high value of
$C_i$.  Subtracting a multiple of the deviation $C_i-c_0$ therefore offsets
part of the common pathwise fluctuation.  This gives the additive control
estimator
\begin{equation}
\widehat V_{\rm cv}=\frac1N\sum_{i=1}^N
[P_i-\widehat\beta(C_i-c_0)],\qquad
\widehat\beta=\frac{\widehat{\Cov}(P,C)}{\widehat{\Var}(C)}.
\label{eq:additive-control}
\end{equation}
For any fixed variance reduction coefficient $\beta$, the correction $\beta(C_i-c_0)$ has mean
zero and hence does not change the target option value.  
We used the sample estimates of $\widehat{\Cov}(P,C)$ and $\widehat{\Var}(C)$
computed within each replication to estimate $\widehat\beta$.

Figure~\ref{fig:european} compares the accuracy-speed tradeoff of the four
simulation schemes for pricing the European vanilla call option 
with parameter values specified in the above.
The reference true value of the vanilla call is $13.214920018$, which was
computed by high precision
Fourier inversion of the Riccati characteristic function derived by
\citet{SchobelZhu1999}.
\Cref{fig:european} shows the plots of RMSE versus CPU time (seconds) for the 4 simulation schemes: splitting, IG, KL and Hilbert for pricing the European vanilla option. Each plot connects 5 points, corresponding to the 5 choices of number of simulation paths \(N\) listed in the above. With a similar amount of CPU consumed, the scheme with lower RMSE is preferred in accuracy-speed tradeoff.
For pricing European vanilla call with one-year maturity, we choose \(q=4\) and \(q=8\) in splitting, where \(q\) is the number of time steps within the one-year life of the vanilla call. This is because we need to choose \(h\) to be sufficiently small in order to reduce the bias in the operator splitting procedure. The choice of a larger value of \(q\) makes splitting less competitive in pricing vanilla options. When we consider pricing of path dependent options, the choice of \(q=1\) suffices in splitting when the number of monitoring instants \(M\) is more than 8. This would make splitting to be highly competitive (see discussion in later subsection).
The plots in \Cref{fig:european} reveal that the 3 simulation schemes: splitting (\(q=8\)), IG and KL show a similar level of performance in accuracy-speed tradeoff, especially at high value of number of simulation paths. For splitting (\(q=4\)), RMSE decreases at slower rate with increasing \(N\) since the bias in the operator splitting procedure persists at a certain level. Hilbert is seen to be the least competitive in all choices of \(N\).
\begin{figure}[H]
\centering
\ifdefined\ORLtwocolumn
\includegraphics[width=0.92\ORLfigurewidth]{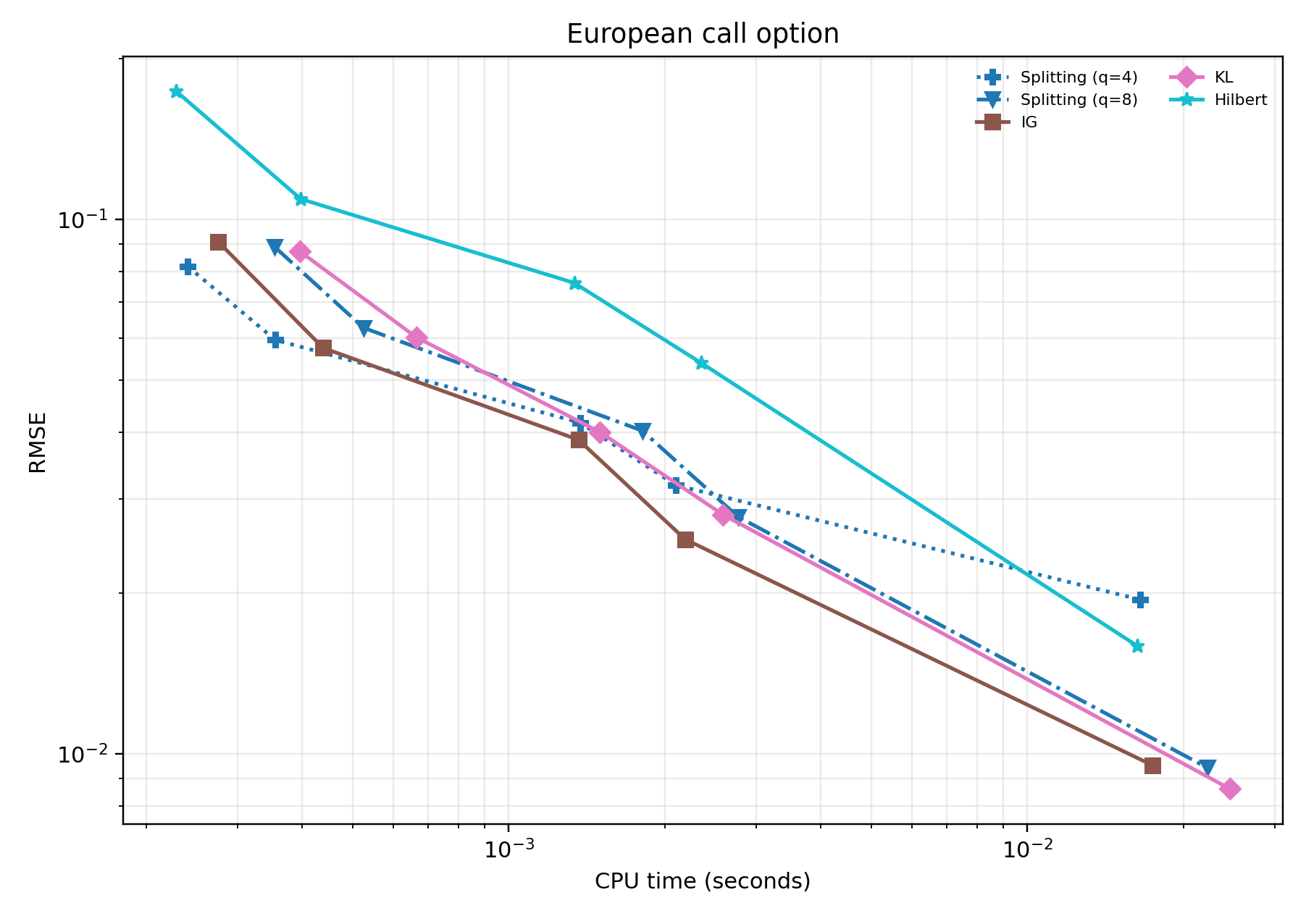}
\else
\includegraphics[width=\ORLfigurewidth]{figures/european_cpu_rmse.png}
\fi
\caption{European vanilla call: plots of RMSE against CPU time for splitting with $q=4$ and
$q=8$, IG, KL, and Hilbert. The 3 simulation schemes: splitting (\(q=8\)), IG and KL show a similar level of performance in accuracy-speed tradeoff.}
\label{fig:european}
\end{figure}

\subsection{Arithmetic Asian call option}\label{sec:asian}
\noindent
The arithmetic Asian call is based on the average of the asset prices observed
over the life of the option.  We use $M=52$ weekly monitoring instants, excluding
the initial price $S_0$, and simulate the discounted Asian call option payoff
\[
e^{-rT}\left(\frac1M\sum_{j=1}^M S_{t_j}-K\right)^+,
\]
where $S_{t_j}$ is the asset price at monitoring instant $t_j, j =1,2, \ldots, M$.

Since the Asian call option payoff depends on the
whole sequence of monitored prices, the conditioning procedure would 
not be the same as that in the vanilla option.
However, we observe that the first independent
$W^{(1)}$ Brownian term is common to all monitored prices.  Since it occurs
before the first monitoring date, it enters every subsequent asset price in the
same manner and can be factored out from the arithmetic average.
For splitting, IG, and KL, we first neglect this Brownian term $W^{(1)}$ and generate
all the remaining variables.  For path $i$, we write the neglected Brownian term as
$\sqrt{V_i}Z_i$, where $Z_i$ is the standard normal variate and $V_i$ is its conditional
variance.  Let $\widetilde S_{t_j,i}$ denote the simulated monitored asset price 
in the $i^{\text{th}}$ path obtained when
the Brownian term is neglected, while all other simulated quantities
remain unchanged.  Since the Brownian term enters in each subsequent asset price multiplicatively, 
we can restore it by multiplying the earlier simulated monitored price $\widetilde S_{t_j,i}$
by the same factor $\exp(\sqrt{V_i}Z_i)$:
\[
S_{t_j,i}=\widetilde S_{t_j,i}\exp(\sqrt{V_i}Z_i),
\qquad j=1,\ldots,M.
\]
We write
\[
A_i^{(0)}=\frac1M\sum_{j=1}^M\widetilde S_{t_j,i},
\qquad
F_{A,i}=A_i^{(0)}\exp\left(\frac{V_i}{2}\right),
\qquad
i= 1,2,\cdots,N.
\]
Here, $A_i^{(0)}$ is the $i^{\text{th}}$ simulated arithmetic average with the neglected Brownian term, 
and $F_{A,i}$ is the corresponding conditional forward.  The restored
simulated arithmetic average $A_i$ then satisfies
\[
A_i=\frac1M\sum_{j=1}^M S_{t_j,i}
=F_{A,i}\exp\left(-\frac{V_i}{2}+\sqrt{V_i}Z_i\right).
\]
Conditional on the other simulated quantities, the arithmetic average is
seen to be lognormal in $Z_i$.  
Let $P_i$ be the conditional Monte Carlo value of the Asian call for path
$i$ after $Z_i$ has been integrated out.  To further reduce sampling variance,
we use $C_i=e^{-rT}F_{A,i}$ as the control variable in
Eq.~\eqref{eq:additive-control}. 
Integrating out with respect to $Z_i$ gives
\begin{equation}
P_i=e^{-rT}\operatorname{BSCall}(F_{A,i},V_i;K).
\label{eq:asian-cmc}
\end{equation} 
Since $F_{A,i}$ is the conditional mean of
the arithmetic average, $C_i$ is the conditional expected discounted
arithmetic average.  It is positively correlated with $P_i$, and its
unconditional mean is known from
$\E[S_{t_j}]=S_0e^{rt_j}$ under the risk-neutral measure:
\[
c_0=\E[C_i]
=e^{-rT}\frac{S_0}{M}\sum_{j=1}^M e^{rt_j}.
\]
We apply Eq.~\eqref{eq:additive-control} to the paired values $(P_i,C_i)$.
This additive adjustment leaves the monitored path unchanged, unlike applying
the European forward rescaling in Eq.~\eqref{eq:multiplicative-correction} to
all monitored prices.  

As a remark, Hilbert does not reveal the common Brownian term
used in the preceding conditioning step.  For this scheme, we may choose $P_i$ to be the direct
simulated Asian call payoff and $C_i$ to be the direct simulated discounted arithmetic
average. The same additive control formula in Eq.~\eqref{eq:additive-control} can be applied.

Figure~\ref{fig:asian} shows the plot of RMSE against CPU time for 
the 4 simulation schemes for pricing arithmetic Asian call. 
The reference true option value is $7.141068705$, which is computed using the
recursion quadrature method of \citet{ZhangZengKwok2023}.
Splitting is seen to be most competitive in accuracy-speed tradeoff among all simulation schemes. 
The other three schemes show a similar level of performance. While the implementation of the Hilbert scheme
is most tedious, it is least competitive in accuracy-speed tradeoff.


\begin{figure}[H]
\centering
\ifdefined\ORLtwocolumn
\includegraphics[width=0.92\ORLfigurewidth]{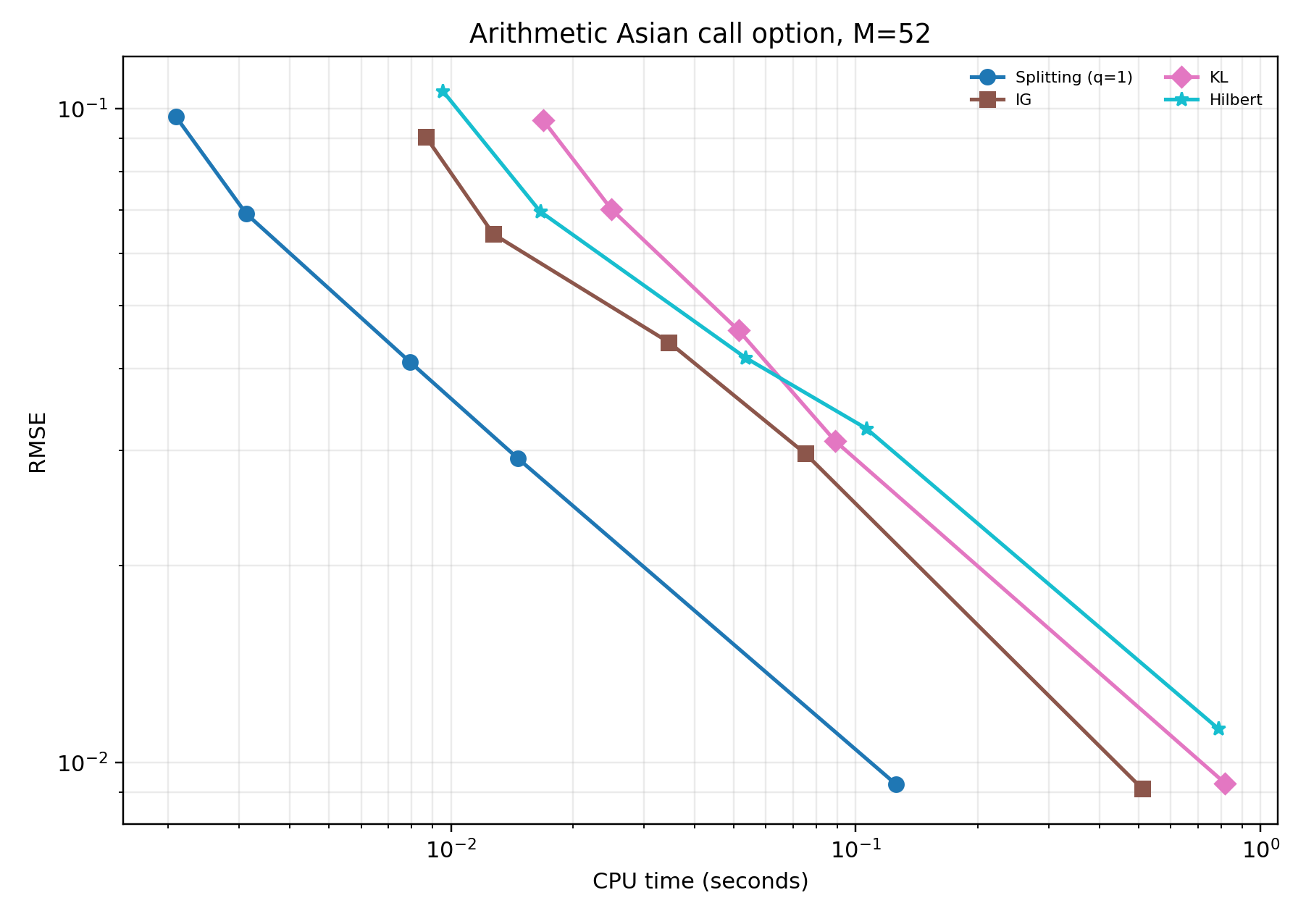}
\else
\includegraphics[width=\ORLfigurewidth]{figures/asian_cpu_rmse.png}
\fi
\caption{Weekly arithmetic Asian call ($M=52$): plots of RMSE against CPU time for 
splitting ($q=1$), IG, KL, and Hilbert schemes. Splitting is seen to be most competitive in
accuracy-speed tradeoff among all simulation schemes. The other three schemes show a similar level of performance.}
\label{fig:asian}
\end{figure}

\subsection{Down-and-in barrier call}\label{sec:barrier}
\noindent
The barrier $B = 90$ is monitored at $M =52$ weekly
instants over the option life of one year. For down-and-in barrier
call option, the discounted barrier call payoff is
$e^{-rT}(S_T-K)^+\mathbf{1}_{\{\tau_B\le T\}}$, 
where $\tau^B$ is the first monitoring time at which
the barrier B is breached.

The following conditioning method works for all four simulation methods. For simulation path \(i\), let \(\tau_i^B\) denote the first monitoring time at which path \(i\) breaches the barrier \(B\); that is,
\(
\tau_i^B := \inf\{t_j:S_{t_j,i}\le B,\; j=1,\ldots,M\}.
\) On the occurrence of the event \(\tau_i^B\le T\), the conditional expectation of the terminal barrier call payoff is the same as the European vanilla call value with asset price \(S_{\tau_i^B}\), volatility \(\sigma_{\tau_i^B}\), and time to maturity \(T-\tau_i^B\).
If \(\tau_i^B=\infty\) (no hitting of barrier on all monitoring dates), the terminal barrier call payoff is set to be zero. Let \(C_{\mathrm{Euro}}(S,\sigma,\tau)\) be the value of the European call with asset price \(S\), volatility \(\sigma\), and time to maturity \(\tau\).
We define \(P_i\) to be the conditional expectation of the terminal barrier call payoff, where
\begin{equation}
P_i=
\begin{cases}
e^{-r\tau_i^B}C_{\mathrm{Euro}}
\left(S_{\tau_i^B},\sigma_{\tau_i^B},T-\tau_i^B\right),
& \text{if }\tau_i^B\le T,\\
0,
& \text{if }\tau_i^B=\infty.
\end{cases}
\end{equation}
We define the control
\(
C_i=
\begin{cases}
e^{-r\tau_i^B}S_{\tau_i^B,i},&\text{if }\tau_i^B\le T,\\
e^{-rT}S_{T,i},&\text{if }\tau_i^B=\infty,
\end{cases}
\) and we have \(\E[C_i]=S_0\), since discounted asset price is a martingale. After the above conditioning procedure, again we apply Eq.~\eqref{eq:additive-control} to the paired values \((P_i,C_i)\).

\Cref{fig:barrier} shows the plots of RMSE against CPU for the four simulation methods. The reference true value of the down-and-in barrier call option is found to be \(1.568940412\), which is obtained from an independent implementation of the conditioning of \citeauthor{Choi2025}'s simulation scheme (\citeyear{Choi2025}) with 5 million paths and \(L=48\). The reported standard error is \(9.62\times10^{-4}\).
The plots reveal that the IG, KL and Hilbert schemes show a similar level of performance at large number of simulation paths, and splitting has the best performance in accuracy-speed tradeoff.

\begin{figure}[H]
\centering
\ifdefined\ORLtwocolumn
\includegraphics[width=0.92\ORLfigurewidth]{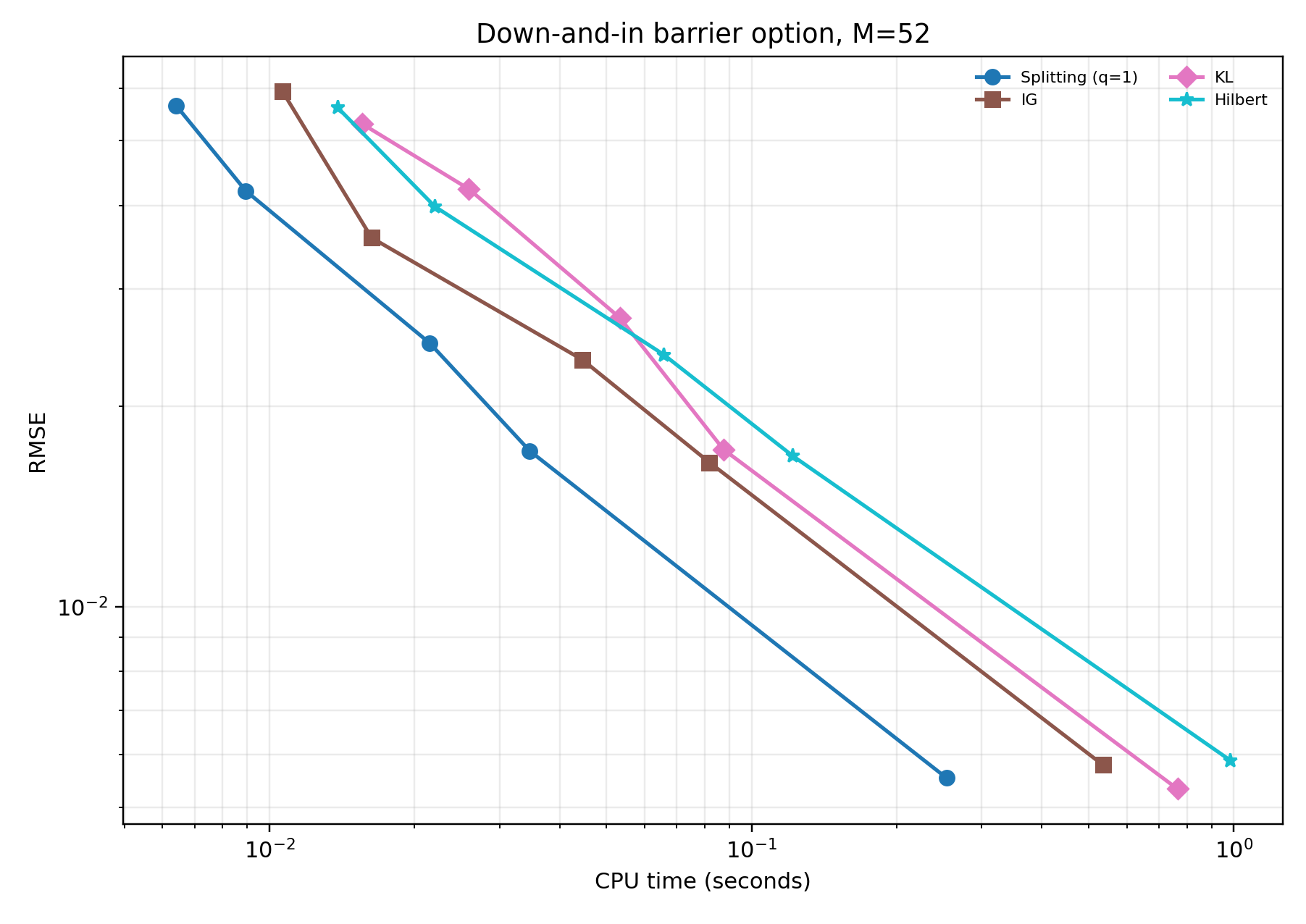}
\else
\includegraphics[width=\ORLfigurewidth]{figures/barrier_cpu_rmse.png}
\fi
\caption{Weekly monitored down-and-in call ($M=52$): plots of RMSE against
CPU time for splitting ($q=1$), IG, KL, and Hilbert.  IG, KL and Hilbert 
show a similar level of performance at larger number of
simulation paths $N$. Splitting is seen to have the 
best performance in accuracy-speed tradeoff among all simulation schemes.}
\label{fig:barrier}
\end{figure}

\subsection{Corridor variance swap}\label{sec:corridor-var-swap}
\noindent
In the corridor variance swap, the floating leg retains a squared log return only when
the asset price at the beginning of the monitoring interval lies in the corridor $(L,U)$,
where $L$ and $U$ are the lower and upper barriers, respectively.
We take $L=80$, $U=120$, and weekly monitoring over the life of one year.
We define the corridor realized variance with $M$ monitoring dates by
\begin{equation}
\operatorname{CRV}_M(L,U)
=
\frac{1}{T}
\sum_{j=1}^{M}
\left(X_{t_j}-X_{t_{j-1}}\right)^2
\mathbf{1}_{\{L\leq S_{t_{j-1}}\leq U\}},
\qquad M=52.
\label{eq:corridor-rv}
\end{equation}
The fair strike of the corridor variance swap is defined by
\[
K_{\mathrm{CRV}}
=
\mathbb{E}\!\left[\operatorname{CRV}_M(L,U)\right],
\]
which can be estimated by simulation by taking the average of
$\operatorname{CRV}_M(L,U)$. Since there is no optionality in the variance swap, the conditioning method is not
applicable. For variance reduction using control variates, for all four simulation methods, we use
\(
C_i=e^{-rT}S_{i,T},\) whose known mean is
\(c_0=\mathbb{E}[C_i]=S_0.
\) The reference true value is $0.034198980391$, which is obtained by integrating the
one-dimensional Fourier representation of \citet{ZhengKwok2014} using the
conditional characteristic function of the OU model.

\Cref{fig:corridor} compares the performance in accuracy--speed tradeoff among the 4 simulation
schemes for finding the fair strike of the corridor variance swap. Again, splitting is seen
to give the best performance in accuracy--speed tradeoff.
\begin{figure}[H]
\centering
\ifdefined\ORLtwocolumn
\includegraphics[width=0.92\ORLfigurewidth]{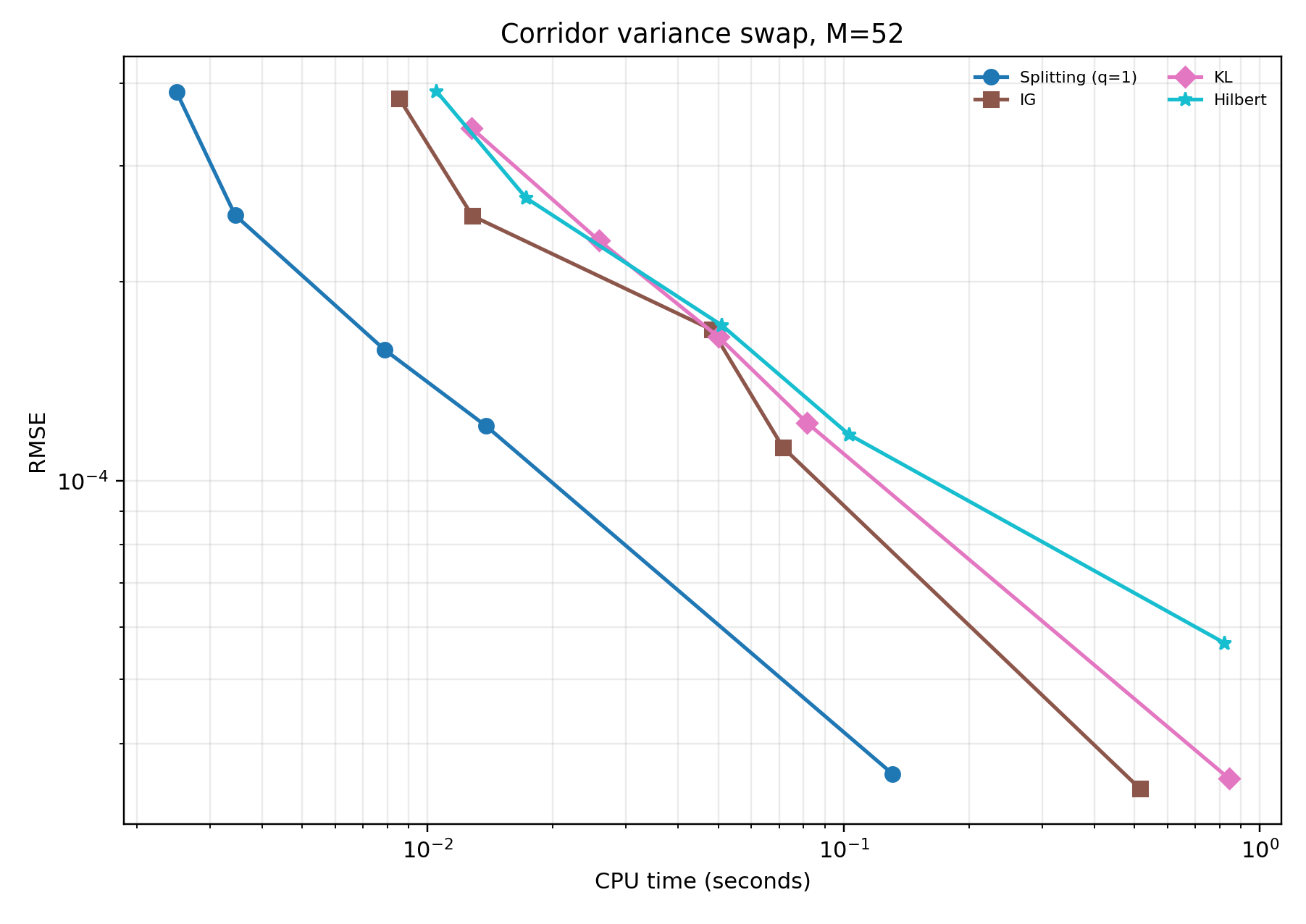}
\else
\includegraphics[width=0.90\linewidth]{figures/corridor_variance_swap_cpu_rmse.png}
\fi
\caption{Weekly corridor variance swap ($M=52$): plots of RMSE against CPU
time for splitting ($q=1$), IG, KL, and Hilbert.  Splitting gives the best
performance in accuracy-speed tradeoff among all simulation schemes.}
\label{fig:corridor}
\end{figure}

\section{Conclusion}\label{sec:conclusion}
\noindent
We develop an efficient Monte Carlo simulation scheme 
based on the operator splitting approach for pricing 
options under the OU driven stochastic volatility model. 
Thanks to the nice analytical tractability in all suboperators 
in the operator splitting procedure, the implementation of 
the scheme only requires simulation of normal variates. 
The numerical challenges of computing conditional 
integrated variance 
and pathwise calculation of inverse integral transform of 
characteristic function are avoided. The computational 
steps are much simplified when compared with the other 
three pioneering simulation schemes proposed in recent 
years. We also include variance reduction via conditioning
and a martingale-preserving control variate as additional 
efficiency enhancement procedures. The above nice features 
that achieve computational efficiency in our proposed 
scheme are well verified by numerical tests of 
accuracy-speed tradeoff comparisons among our scheme 
and three other simulation schemes. The advantages of 
our schemes are best demonstrated in pricing path 
dependent options with larger number of monitoring 
instants. Besides the introduction of a new efficient 
simulation scheme, we also present a summary of existing 
simulation schemes that use different innovative 
approaches that minimize the numerical bottlenecks in 
computing conditional integrated variance and inverse integral 
transform calculations. Interestingly, these other 
schemes also show a similar level of performance in 
accuracy-speed tradeoff but they all lose to our operator 
splitting scheme in pricing path dependent options.    

\section*{Acknowledgment}
\noindent The work of Yue Kuen Kwok was supported by Project 2023CX10X1 and the Guangzhou-HKUST (GZ) Joint Funding Program under Grant Number 2024A03J0630.

\section*{Declaration of Interest Statement}
\noindent There is no interest to declare.

\bibliography{ORL_OUSV_references}

\end{document}